\documentclass[12pt]{article}

\usepackage[margin=1in]{geometry}
\usepackage{amsmath,amssymb,amsthm}
\usepackage{booktabs}
\usepackage{graphicx}
\usepackage{enumitem}
\usepackage{hyperref}
\usepackage{natbib}
\usepackage{authblk}
\usepackage{algorithm}
\usepackage{algpseudocode}
\usepackage{xcolor}
\usepackage{microtype}
\usepackage{caption}
\usepackage{float}

\hypersetup{
  colorlinks=true,
  linkcolor=blue!70!black,
  citecolor=blue!70!black,
  urlcolor=blue!70!black
}

\newtheorem{theorem}{Theorem}[section]
\newtheorem{proposition}[theorem]{Proposition}

\theoremstyle{definition}
\newtheorem{assumption}[theorem]{Assumption}

\theoremstyle{remark}
\newtheorem{remark}[theorem]{Remark}

\newenvironment{Assumption}[1][]{\begin{assumption}[#1]}{\end{assumption}}
\newenvironment{Proposition}[1][]{\begin{proposition}[#1]}{\end{proposition}}
\newenvironment{Remark}[1][]{\begin{remark}[#1]}{\end{remark}}

\title{\textbf{When Agents Disagree: The Selection Bottleneck in\\
Multi-Agent LLM Pipelines}}

\author[1]{Artem Maryanskyy}
\author[2]{Dmitry Budnikov}
\author[3]{Alibek T. Kaliyev}
\affil[1]{Uber Technologies, Sunnyvale, CA 94085, USA;
  \href{mailto:artem.maryanskyy@uber.com}{artem.maryanskyy@uber.com}}
\affil[2]{Laboratories of Electrical, Thermal Technologies and Energy Saving,
  Federal Scientific Agroengineering Center VIM, 109428 Moscow, Russia;
  \href{mailto:dimm13@inbox.ru}{dimm13@inbox.ru}}
\affil[3]{Department of Computer Science, University of Texas at Austin,
  Austin, TX 78712, USA;
  \href{mailto:alibek.kaliyev@utexas.edu}{alibek.kaliyev@utexas.edu}}

\date{}

\begin{document}

\maketitle

\begin{center}
\small
\textbf{Correspondence:} Artem Maryanskyy (\href{mailto:artem.maryanskyy@uber.com}{artem.maryanskyy@uber.com})\\[4pt]
\textit{Published in:} Applied Sciences (MDPI), 2026.
DOI: \href{https://doi.org/10.3390/app1010000}{10.3390/app1010000}\\[4pt]
\textit{Note on versions:} v1 of this preprint listed A.\ Maryanskyy as sole author;
v2 updates the author list to match the published version, which includes all three
contributing authors (see Author Contributions below).
\end{center}

\begin{abstract}
Multi-agent LLM pipelines produce contradictory evidence on whether team diversity
improves output quality: heterogeneous Mixture-of-Agents teams outperform single models,
yet homogeneous Self-MoA teams consistently win under synthesis-based aggregation.
We propose a resolution by identifying the \textit{selection bottleneck}---a crossover
threshold in aggregation quality that determines whether diversity helps or hurts.
Under this model, we obtain a closed-form crossover threshold $s^*$
(Proposition~\ref{prop:bottleneck}) that separates the regimes where diversity helps
and hurts. In a targeted experiment spanning 42 tasks across 7 categories ($N = 210$),
a diverse team with judge-based selection achieves a win rate of 0.810 against a
single-model baseline, while a homogeneous team scores 0.512---near chance
(Glass's $\Delta = 2.07$). Judge-based selection outperforms MoA-style synthesis by
$\Delta_{\mathrm{WR}} = +0.631$---the synthesis approach is preferred over the
baseline in zero of 42 tasks by the judge panel. A decoupled evaluation with fully
independent judges confirms all directional findings (Spearman $\rho = 0.90$); win
rates attenuate by 53--67\% under independent evaluation relative to the primary
estimates, consistent with partial measurement circularity in the judge-based cells.
Exploratory evidence suggests that including a weaker model improves performance while
reducing cost ($p < 10^{-4}$, not pre-registered). Our results suggest that selector
quality may be a more impactful design lever than generator diversity in single-round
generate-then-select pipelines, with a specific Opus-only singleton as the baseline.
\end{abstract}

\noindent\textbf{Keywords:} multi-agent systems; large language models; model diversity;
aggregation mechanisms; LLM-as-judge; selection bottleneck; Mixture-of-Agents

\vspace{1em}

\section{Introduction}
\label{sec:intro}

``A group of diverse problem solvers can outperform a group of high-ability problem solvers''---so concluded Hong and Page~\cite{hong2004groups} in one of the most cited results in collective intelligence. Two decades later, the multi-agent LLM community is reliably reproducing both sides of this claim. Wang et al.~\cite{wang2024mixture} demonstrated that mixing models from different families into a Mixture-of-Agents architecture improves response quality over any single model, declaring their approach ``achieve[s] state-of-the-art on AlpacaEval 2.0, MT-Bench, and FLASK.'' Li et al.~\cite{li2025self} tested the same premise under controlled conditions and reached the opposite conclusion: Self-MoA---a homogeneous team of identical models---``consistently outperforms Mixed-MoA across all benchmarks.'' Both papers are methodologically sound. Both claim generality, yet their conclusions appear contradictory.

We identify a mechanism that reconciles this tension: the \textbf{selection bottleneck}. Whether a diverse team's high-variance candidate pool is an asset or a liability depends entirely on how those candidates are aggregated. A synthesis-based aggregator---the method both prior works employed---compresses all candidates into one blended response, forfeiting the advantage of having generated a standout candidate. A selection-based aggregator evaluates candidates individually and picks the best, exploiting exactly the variance that synthesis wastes. The distinction is not a nuance; it is the dominant factor determining whether diversity helps.

In a targeted experiment crossing team composition with aggregation mechanism across 42 tasks spanning seven categories, we observe a crossover. A diverse team dramatically outperforms a homogeneous team under judge-based selection, but the same diverse team provides no advantage under majority voting. The largest effect in our data is the comparison practitioners most need: judge-based selection versus MoA-style synthesis. Selection wins in every single one of the 42 tasks across all 7 categories. The synthesis approach, far from combining the best of multiple agents, produces outputs that lose to a single-model baseline over 80\% of the time. Win rates reported throughout this paper are Bradley--Terry-corrected (BT-WR; formally defined in Section~\ref{sec:metric}). The apparent paradox may dissolve: Li et al.\ and Wang et al.\ each describe one side of a crossover governed by aggregator quality.

A simple linear model makes this precise. We define a selector quality parameter $s \in [0,1]$ and obtain a crossover threshold $s^*$ below which diversity hurts and above which it helps (Proposition~\ref{prop:bottleneck}). The model predicts that selector quality should have no effect for homogeneous teams, consistent with the observed near-chance performance of the homogeneous cell in our data. When all candidates are drawn from the same distribution, there is nothing for even a perfect selector to exploit. Figure~\ref{fig:bottleneck} illustrates the crossover geometry.

One finding warrants particular attention, though we label it exploratory because it was not pre-registered. Adding a substantially weaker model (Claude Haiku) to a strong diverse team is associated with a significantly higher win rate and lower cost. The mechanism is plausible: a model from a different capability tier introduces orthogonal error patterns, raising the oracle ceiling even as it lowers the team mean. A strong selector captures the upside and ignores the downside.

This paper makes three contributions. First, we introduce the \textbf{selection bottleneck model}---an analytical explanatory lens that yields a closed-form crossover threshold $s^*$ unifying pro-diversity and anti-diversity findings (Proposition~\ref{prop:bottleneck}). The threshold offers one operationalization of Hong and Page's abstract notion of an ``effective aggregation mechanism.'' Second, we present \textbf{empirical evidence that selection dramatically outperforms synthesis}---the aggregation mechanism used by both Wang et al.\ and Li et al.---with the largest effect size in our study, demonstrating that the MoA paradigm of blending candidates is substantially outperformed by selection in our setting. Third, we report \textbf{exploratory evidence of a weak-model paradox}: including a cheaper, weaker model may simultaneously improve quality and reduce cost, a finding that, if replicated, inverts conventional wisdom about team composition.

\begin{figure}[H]
  \centering
  \includegraphics[width=\textwidth]{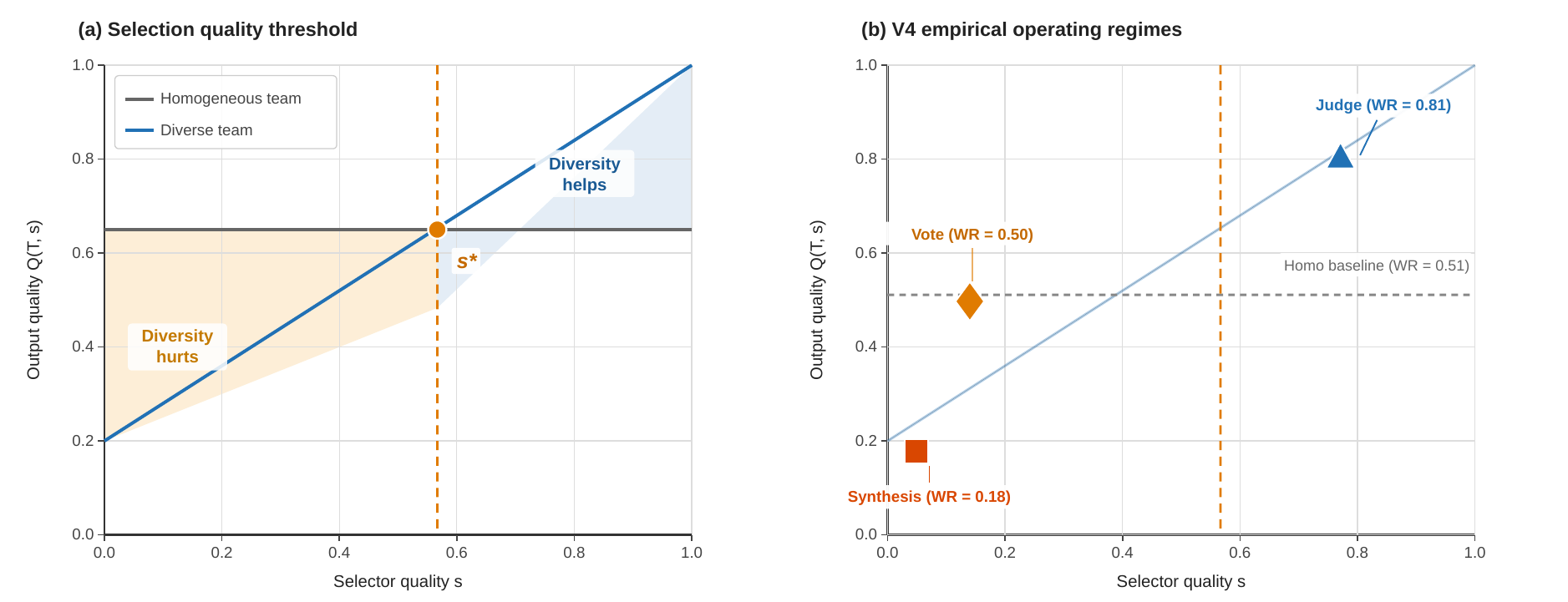}
  \caption{\textbf{The selection bottleneck.} \textbf{(a)}~Theoretical output quality $Q(T,s)$ as a function of selector quality $s$. The diverse team (blue) crosses the homogeneous baseline (gray) at the crossover threshold $s^*$ (dashed vertical). Below $s^*$, diversity hurts; above it, diversity helps. \textbf{(b)}~Empirical operating regimes from V4 data. Judge-based selection operates well above $s^*$ (WR = 0.810), majority vote sits near $s^*$ (WR = 0.496), and MoA-style synthesis falls far below it (WR = 0.179). The homogeneous baseline (WR = 0.512, dotted) is shown for reference.}
  \label{fig:bottleneck}
\end{figure}

\section{Related Work}
\label{sec:related}

Multi-agent architectures for large language models have moved from curiosity to engineering pattern in under two years, yet the field lacks a unifying account of \emph{when} combining models helps and when it does not. We organize prior work into five threads and identify the gap our framework addresses.

\subsection{Multi-Agent Debate and Frameworks}

Multi-agent frameworks such as AutoGen~\cite{wu2023autogen}, CAMEL~\cite{li2023camel}, and MetaGPT~\cite{hong2023metagpt} enable flexible orchestration topologies where agents converse and collaborate. Du et al.~\cite{du2023debate} showed that multi-agent debate improves factual accuracy by forcing agents to defend claims against challenges. Smit et al.~\cite{smit2023mad} found substantial variation across task types, and Choi et al.~\cite{choi2025identity} uncovered \emph{identity bias}---LLMs preferentially agree with their own prior outputs---undermining the independence assumption. Identity bias is directly relevant to our framework: it manifests as elevated inter-agent correlation $\rho$, reducing the effective sample size of any aggregation mechanism that relies on independence.

\subsection{Mixture-of-Agents and Self-MoA}

Wang et al.~\cite{wang2024mixture} proposed Mixture-of-Agents (MoA), a layered architecture in which diverse ``proposer'' models generate candidates that an ``aggregator'' synthesizes into a final output. MoA with heterogeneous proposers outperformed any single model. Li et al.~\cite{li2025self} challenged this with Self-MoA, reporting that homogeneous teams ``consistently outperform Mixed-MoA.'' Both analyses share a critical blind spot: neither varied the aggregation mechanism. Both used synthesis-based aggregation, operating at a fixed selector quality $s$ that our framework identifies as the key moderator. The contradiction is not about diversity per se but about whether the aggregator operates above or below the crossover threshold $s^*$.

\subsection{Selection and Routing}

Jiang et al.~\cite{jiang2023llmblender} proposed LLM-Blender, a two-stage pipeline that ranks candidate outputs using a learned scoring model then fuses the top candidates---the ranking stage is precisely a selection mechanism in our sense. Chen et al.~\cite{chen2025llmselector} extended this to per-query routing with LLMSelector. Their approach is complementary: they select \emph{which model to call}, while we select \emph{which output to keep}. Post-hoc selection over a candidate pool can exploit instance-level variance that pre-hoc routing cannot observe. Our judge-based selection is related to best-of-$N$ sampling with reward models~\cite{stiennon2020learning, snell2024scaling}, a well-studied approach in the inference-time compute scaling literature. The key difference is that best-of-$N$ generates multiple samples from a single model, while our framework generates from diverse models. Self-consistency~\cite{wang2023selfconsistency} similarly samples multiple reasoning paths and aggregates via majority vote; our selection mechanism generalizes this by using a judge to evaluate open-ended outputs where voting has no natural aggregation target.

\subsection{Diversity Theory}

Hong and Page~\cite{hong2004groups} proved that under certain conditions, a diverse group outperforms a group of individually superior agents---but their theorem requires an ``effective aggregation mechanism'' without specifying what makes it effective. Tang et al.~\cite{tang2026value} provided complementary empirical evidence but did not isolate the aggregation mechanism's role. Our crossover threshold $s^*$ provides exactly this quantification. The classical Condorcet Jury Theorem~\cite{condorcet1785essai} shows that majority vote among independent jurors converges to correctness, but Ladha~\cite{ladha1992condorcet} showed that positive inter-voter correlation weakens the theorem---a prescient result for LLM agents. Li et al.~\cite{li2024more} show that majority voting improves with agent count; our results suggest this scaling depends critically on task type and may not hold for open-ended generation where voting has no natural aggregation target. We provide empirical evidence from Chatbot Arena and MT-Bench that inter-model correlation is substantially higher within families than across them (Section~\ref{sec:cjt}).

\subsection{LLM-as-Judge}

Zheng et al.~\cite{zheng2023judging} established that strong LLMs achieve high agreement with human raters, making LLM-as-judge a practical evaluation tool. Stureborg et al.~\cite{stureborg2024evaluation} documented systematic biases (verbosity, position, self-enhancement), and Panickssery et al.~\cite{panickssery2024self} showed evaluators are susceptible to sycophancy. Verga et al.~\cite{verga2024replacing} proposed using panels of diverse judges to mitigate biases---an approach we adopt. Our evaluation protocol randomizes candidate order, enforces agent--judge separation, and uses a multi-judge panel.

\subsection{The Gap This Paper Fills}

Each thread above contributes a piece, but no prior work has \textbf{simultaneously varied team composition and aggregation mechanism in a controlled design}. Without crossing these factors, observed diversity effects may be confounded with the aggregation approach. Moreover, the comparison between selection-based and synthesis-based aggregation---critical for practitioners choosing a pipeline architecture---has not been evaluated in a controlled setting. Our five-cell experiment is designed to fill exactly this gap.

\section{Materials and Methods}
\label{sec:methods}

We model multi-agent LLM pipelines as a two-stage process---\emph{generation} of a candidate set, followed by \emph{aggregation} of a final output---and derive the conditions under which team diversity helps or hurts. We then describe the experimental protocol used to test these predictions. The core intuition is simple: a scout choosing the best performer from a diverse talent pool will do better than one choosing from a homogeneous group---but only if the scout is skilled enough to identify the best performer. The model formalizes this threshold, and the experiment tests whether judge-based selection clears it.

\subsection{Setup and Notation}
\label{sec:notation}

A \textbf{team} $T = \{m_1, \ldots, m_n\}$ consists of $n$ LLM agents, each producing one candidate response to a given input. Each candidate carries a latent quality score $q_i$ drawn from a model-specific distribution $F_i$; we write $\mu_i = \mathbb{E}[q_i]$ for agent $i$'s expected quality. A \textbf{homogeneous} team has $F_i = F_j$ for all $i, j$; a \textbf{diverse} team has $F_i \neq F_j$ for at least one pair.

Two statistics summarize the candidate pool:
\begin{itemize}[nosep]
  \item \textbf{Team mean:} $M(T) = \frac{1}{n} \sum_{i=1}^n \mu_i$, the expected quality of a randomly chosen candidate.
  \item \textbf{Team oracle:} $O(T) = \mathbb{E}[\max_i\, q_i]$, the expected quality of the best candidate.
\end{itemize}

The gap $\Delta(T) = O(T) - M(T) \geq 0$ measures the team's \textbf{exploitable diversity}: the gain available to a perfect selector over a random one. A \textbf{selector} $\mathcal{S}$ is a (possibly randomized) function that maps the candidate pool $(q_1, \ldots, q_n)$ to a single index $\mathcal{S}(q_1, \ldots, q_n) \in \{1, \ldots, n\}$. We define the \textbf{selector quality} as:

\begin{equation}
s(\mathcal{S}, T) \;=\; \frac{\mathbb{E}[q_{\mathcal{S}}] - M(T)}{O(T) - M(T)},
\label{eq:selector_quality}
\end{equation}
provided $O(T) > M(T)$. A random selector has $s = 0$; a perfect selector (always choosing $\arg\max_i q_i$) has $s = 1$.

\subsection{The Selection Quality Model}
\label{sec:sqm}

We model the expected output quality as a linear interpolation between the team mean and the team oracle, governed by selector quality.

\begin{Assumption}[Linear Selection Model]
\label{ass:linear}
For a team $T$ with selector quality $s \in [0,1]$, the expected quality of the selected output is:

\begin{equation}
Q(T, s) \;=\; s \cdot O(T) \;+\; (1 - s) \cdot M(T).
\label{eq:quality}
\end{equation}
We adopt $[0,1]$ as the natural range for proper selection mechanisms. Destructive aggregators (e.g., synthesis-based blending) may effectively operate at $s < 0$, as discussed in Assumption~\ref{ass:synthesis} and the subsequent Remark.
\end{Assumption}

This is a modeling choice, not a derived result. We adopt it for two reasons. First, Eq.~\eqref{eq:quality} is exact when $n = 2$ and the selector operates as a noisy argmax with Gaussian noise. Second, its predictions match the observed data: the model's calibrated $s^*$ places the empirical crossover between diversity-helps and diversity-hurts conditions within a narrow confidence band (Section~\ref{sec:calibration}).

We are candid about what this model is and is not. It is a descriptive parameterization---a lens for organizing empirical findings and generating testable predictions. Its value lies in predictive power, not mathematical depth. The assumption trades generality for interpretability: a single parameter $s$ compresses the complex behavior of real selectors into a quantity that can be estimated, compared, and communicated to practitioners.

\begin{Assumption}[Homogeneous Team Invariance]
\label{ass:homogeneous}
For a homogeneous team $T_h$ (all agents drawn from the same distribution $F$), within-model sampling variation provides negligible material for selection: $O(T_h) \approx M(T_h) = \mu_F$, so that $Q(T_h, s) \approx \mu_F$ for all $s \in [0,1]$.
\end{Assumption}

This assumption is motivated by the observation that same-model copies at moderate temperature produce functionally identical outputs---our decoupled evaluation confirms this: all 756 pairwise verdicts from independent judges were ties for the homogeneous cell (Section~\ref{sec:separation}). We acknowledge that complete tie rates can in principle reflect judge insensitivity to minor stylistic variation rather than true indistinguishability. We interpret the result as a validation rather than an artifact: the independent judges, who never participated in any selection decision, still cannot distinguish between the three Opus outputs, which supports the practical force of Assumption~\ref{ass:homogeneous}. The distinction matters little operationally---whether outputs are ``identical'' or merely ``indistinguishable to judges,'' the selector cannot exploit what the evaluator cannot differentiate.

\subsection{The Selection Bottleneck}
\label{sec:bottleneck}

The central observation follows from the two assumptions above.

\begin{Proposition}[Crossover Threshold]
\label{prop:bottleneck}
Suppose Assumptions~\ref{ass:linear} and~\ref{ass:homogeneous} hold. Let $T_h$ be a homogeneous team with mean quality $\mu_{\mathrm{best}}$, and let $T_d$ be a diverse team satisfying:
\begin{enumerate}[nosep,label=(\roman*)]
  \item $M(T_d) < \mu_{\mathrm{best}}$ \quad (the diverse team's mean is lower),
  \item $O(T_d) > \mu_{\mathrm{best}}$ \quad (the diverse team's oracle is higher).
\end{enumerate}
Then there exists a unique $s^* \in (0,1)$ such that
\[
  Q(T_d, s) > Q(T_h, s) \iff s > s^*,
\]
where

\begin{equation}
s^* \;=\; \frac{\mu_{\mathrm{best}} - M(T_d)}{O(T_d) - M(T_d)}.
\label{eq:threshold}
\end{equation}
\end{Proposition}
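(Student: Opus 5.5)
The plan is to reduce the comparison to the sign of a linear function of $s$. First I evaluate both sides of the inequality in closed form. By Assumption~\ref{ass:homogeneous}, $Q(T_h,s)=\mu_{\mathrm{best}}$ for every $s\in[0,1]$, so the homogeneous curve is flat. I will treat this as an exact identity; the ``$\approx$'' in the assumption is read as equality for the purpose of the proposition. By Assumption~\ref{ass:linear} applied to $T_d$,
\[
Q(T_d,s)=M(T_d)+s\,\bigl(O(T_d)-M(T_d)\bigr).
\]
Writing $\Delta_d=O(T_d)-M(T_d)$, conditions (i) and (ii) together give $M(T_d)<\mu_{\mathrm{best}}<O(T_d)$. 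Hence $\Delta_d>0$, and in particular $O(T_d)>M(T_d)$, so the selector quality in Eq.~\eqref{eq:selector_quality} is well defined for $T_d$.

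Next I form the difference
\[
D(s)=Q(T_d,s)-Q(T_h,s)=\Delta_d\, s-\bigl(\mu_{\mathrm{best}}-M(T_d)\bigr).
\]
This is affine in $s$ with strictly positive slope $\Delta_d$, so it is strictly increasing. Its unique root is $s^*=(\mu_{\mathrm{best}}-M(T_d))/\Delta_d$, which is exactly Eq.~\eqref{eq:threshold}. Because $D$ is strictly increasing, $D(s)>0$ holds if and only if $s>s^*$, which is the claimed equivalence. Uniqueness also follows: any other threshold $t\neq s^*$ would misclassify the points strictly between $t$ and $s^*$.

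It remains to place $s^*$ in $(0,1)$. The numerator $\mu_{\mathrm{best}}-M(T_d)$ is positive by (i), so $s^*>0$. The numerator is strictly less than the denominator $\Delta_d$, since that is equivalent to $\mu_{\mathrm{best}}<O(T_d)$, which is (ii); so $s^*<1$. Equivalently, $D(0)<0$ and $D(1)>0$.

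The only delicate point is interpretive rather than computational. The equivalence should be stated for $s$ in the domain $[0,1]$ on which Assumption~\ref{ass:homogeneous} asserts invariance of the homogeneous team. It extends to all real $s$ only if the homogeneous value is also taken to be constant outside $[0,1]$, which matters for the destructive $s<0$ regime mentioned after Assumption~\ref{ass:linear}. I will state the result on $[0,1]$ and note this extension in a remark.
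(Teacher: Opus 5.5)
Your proposal is correct and follows essentially the same route as the paper. Both arguments take $Q(T_h,\cdot)\equiv\mu_{\mathrm{best}}$ from Assumption~\ref{ass:homogeneous}, take $Q(T_d,\cdot)$ affine and strictly increasing from Assumption~\ref{ass:linear}, use (i) and (ii) for the endpoint signs, and then solve the linear equation for $s^*$. Your explicit remarks on reading the ``$\approx$'' as an equality and on stating the equivalence for $s\in[0,1]$ make precise what the paper's proof leaves implicit.
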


\begin{proof}
By Assumption~\ref{ass:homogeneous}, $Q(T_h, s) = \mu_{\mathrm{best}}$ for all $s$. By Assumption~\ref{ass:linear},
\[
  Q(T_d, s) = s \cdot O(T_d) + (1-s) \cdot M(T_d),
\]
which is strictly increasing in $s$ since $O(T_d) > M(T_d)$. We have $Q(T_d, 0) = M(T_d) < \mu_{\mathrm{best}}$ by~(i) and $Q(T_d, 1) = O(T_d) > \mu_{\mathrm{best}}$ by~(ii). Setting $Q(T_d, s^*) = \mu_{\mathrm{best}}$ and solving yields Eq.~\eqref{eq:threshold}. The strict monotonicity of $Q(T_d, \cdot)$ gives the if-and-only-if.
\end{proof}

The proposition formalizes a crossover. Below $s^*$, diversity is a liability---this is the regime Li et al.~\cite{li2025self} documented. Above $s^*$, diversity becomes an asset---this is the regime our judge-based condition operates in.

\begin{Remark}[Nonlinear Generalization]
Proposition~\ref{prop:bottleneck} extends to any model of the form $Q(T, s) = g(s) \cdot O(T) + (1 - g(s)) \cdot M(T)$ where $g\colon [0,1] \to [0,1]$ is strictly increasing with $g(0) = 0$ and $g(1) = 1$. The crossover point becomes $s^* = g^{-1}\!\bigl((\mu_{\mathrm{best}} - M(T_d))/(O(T_d) - M(T_d))\bigr)$.
\end{Remark}

\begin{Remark}[Model Quality Irrelevance]
\label{rem:irrelevance}
Assumption~\ref{ass:homogeneous} implies that homogeneous teams built from different models should each perform at their respective $\mu_{\mathrm{model}}$ regardless of selector quality---there is nothing to select among when all candidates are distributionally identical. Our design includes one homogeneous cell (\texttt{homo\_opus}); testing with additional homogeneous teams is left to future work.
\end{Remark}

\begin{Assumption}[Synthesis as Non-Selective Aggregation]
\label{ass:synthesis}
A synthesis-based aggregator that blends all candidates into a single output, rather than selecting among them, operates at effective selector quality $s_{\mathrm{synth}} \approx 0$. Under Assumption~\ref{ass:linear}, this yields $Q(T_d, s_{\mathrm{synth}}) \approx M(T_d) < \mu_{\mathrm{best}}$ for a diverse team satisfying condition~(i) of Proposition~\ref{prop:bottleneck}.
\end{Assumption}

\begin{Remark}
In practice, synthesis may produce outputs below $M(T_d)$ if blending introduces incoherence. The observed synthesis win rate of $0.179$ (Section~\ref{sec:confirmatory}) is consistent with $s_{\mathrm{synth}} \leq 0$ under our model.
\end{Remark}

This is a central prediction: MoA-style synthesis should perform poorly with diverse teams not because diversity is harmful, but because synthesis destroys the very signal that diversity creates. We test this prediction directly in Section~\ref{sec:confirmatory}.

\subsection{Optimal Team Size}
\label{sec:teamsize}

\begin{Remark}[Team Expansion]
\label{rem:teamsize}
Under Assumption~\ref{ass:linear}, adding agent $m_{n+1}$ with mean quality $\mu_{n+1}$ to team $T_n$ changes expected output quality by

\begin{equation}
\delta(n, s) = s \cdot \bigl[O(T_{n+1}) - O(T_n)\bigr] + (1-s) \cdot \bigl[M(T_{n+1}) - M(T_n)\bigr].
\label{eq:marginal}
\end{equation}

When $\mu_{n+1} < M(T_n)$---i.e., the new agent is below the team average---the second term is negative, creating a trade-off between oracle gain and mean dilution governed by $s$. Larger $s$ justifies larger teams; at $s = 0$, the optimal team size is 1. Our empirical data (Section~\ref{sec:exploratory}) are consistent with monotonic improvement with diminishing returns under judge-based selection.
\end{Remark}

\subsection{Connection to Classical Results}
\label{sec:cjt}

\textbf{Condorcet Jury Theorem.}
The classical CJT~\cite{condorcet1785essai} shows that majority vote among $n$ independent jurors with accuracy $p > 0.5$ converges to correctness as $n \to \infty$. Ladha~\cite{ladha1992condorcet} extended this to correlated voters: positive correlation $\rho$ reduces the effective number of independent votes.

LLM agents are not independent. In Chatbot Arena~\cite{chiang2024chatbot} (57,477 battles), the tie rate between same-family model pairs is 34.9\%, compared to 28.8\% for cross-family pairs ($\chi^2 = 117.61$, $p < 10^{-27}$; rate difference 6.1 pp [95\% CI: 5.0, 7.2]). In MT-Bench (3,355 human judgments), the same-tier tie rate is 31.9\% versus 20.9\% cross-tier ($\chi^2 = 37.46$, $p < 10^{-6}$; difference 11.0 pp [95\% CI: 7.5, 14.5]). These are not marginal differences---they reflect structural correlations from shared training. Diverse teams reduce $\rho$ by mixing families, which in principle should improve voting accuracy. However, our data show that majority vote achieves only chance-level performance even with a diverse team (WR = 0.496), suggesting that reduced correlation alone is insufficient---the CJT's requirement that individual juror accuracy exceeds 0.5 may not hold for open-ended generation tasks where ``correctness'' is ill-defined. The benefit of reduced correlation may instead manifest through improved selection: candidates are less redundant, giving the judge more distinct options.

\textbf{Hong and Page.}
Hong and Page~\cite{hong2004groups} proved that diverse groups can outperform ability-selected groups under specific conditions on a discrete optimization landscape---a setting that differs formally from LLM candidate selection. Their result provides conceptual motivation: they showed that an ``effective aggregation mechanism'' is required but did not specify what makes it effective. Our threshold $s^*$ offers one operationalization: diversity wins when selector quality exceeds $s^*$, which can be estimated from pilot data. The model also makes explicit three failure regimes: (1)~when $s < s^*$, (2)~when the mean-quality gap is too large, and (3)~when the team is too small for the oracle advantage to accumulate.

\subsection{Design Overview}
\label{sec:design}

We employ a targeted cell design testing five specific composition$\times$selector combinations across 42 tasks (7 categories $\times$ 6 tasks each), yielding $5 \times 42 = 210$ experimental runs. Table~\ref{tab:design} summarizes the five cells and their experimental purpose. Rather than a full factorial, this design tests each hypothesis by varying one factor at a time relative to a reference cell (\texttt{diverse\_strong+judge}), maximizing statistical power for the comparisons of interest.

\begin{table}[H]
  \centering
  \caption{Experimental design. Five cells test specific hypotheses by varying one factor relative to the reference cell. Each cell is evaluated on all 42 tasks ($N = 42$ per cell, $N = 210$ total).}
  \label{tab:design}
  \begin{tabular}{llll}
    \toprule
    Cell & Agents & Selector & Tests \\
    \midrule
    \texttt{div\_strong+judge}    & Opus + GPT-5.4 + Gem.-Pro & Judge panel & Reference \\
    \texttt{homo\_opus+judge}     & Opus $\times 3$           & Judge panel & Diversity effect \\
    \texttt{div\_mixed+judge}     & Opus + Gem.-Pro + Haiku   & Judge panel & Weak-model effect \\
    \texttt{div\_strong+vote}     & Opus + GPT-5.4 + Gem.-Pro & Majority vote & Judge vs.\ vote \\
    \texttt{div\_strong+synth}    & Opus + GPT-5.4 + Gem.-Pro & MoA synthesis & Selection vs.\ synthesis \\
    \bottomrule
  \end{tabular}
\end{table}

\subsection{Team Compositions}
\label{sec:teams}

Three team compositions span two dimensions: \textbf{model diversity} (homogeneous vs.\ heterogeneous) and \textbf{model capability} (all-strong vs.\ mixed-capability). All teams consist of three agents.

\begin{itemize}[nosep]
  \item \texttt{homo\_opus}: Claude Opus $\times$ 3. All-strong homogeneous baseline.
  \item \texttt{diverse\_strong}: Claude Opus + GPT-5.4 + Gemini 2.5 Pro. Three frontier models from different families.
  \item \texttt{diverse\_mixed}: Claude Opus + Gemini 2.5 Pro + Claude Haiku. Two strong models plus one weaker model from a different capability tier.
\end{itemize}

\subsection{Selector Mechanisms}
\label{sec:selectors}

Each team's individual outputs are aggregated via one of three mechanisms:

\begin{itemize}[nosep]
  \item \textbf{Judge-based selection}: An external judge panel---Claude Sonnet, GPT-5-mini, and DeepSeek-V3p2---reads all candidate outputs and selects the best via pairwise comparison with Bradley--Terry scoring (Section~\ref{sec:metric}). Candidate order was randomized for each pairwise judge comparison to control for position bias~\cite{stureborg2024evaluation}.
  \item \textbf{Majority vote}: Each agent independently selects the best candidate from the pool. The candidate receiving the most votes is chosen; ties are broken randomly.
  \item \textbf{MoA synthesis}: Claude Sonnet reads all candidate outputs and produces a single synthesized response that blends elements from each candidate, following the Mixture-of-Agents protocol~\cite{wang2024mixture}.
\end{itemize}

\subsection{Agent--Judge Separation}
\label{sec:separation}

We enforce \textbf{strict zero-overlap} between agent and judge pools within each run:

\textbf{Agent models} (generate candidates): Claude Opus, GPT-5.4, Gemini 2.5 Pro, Claude Haiku.

\textbf{Judge models} (evaluate and select): Claude Sonnet, GPT-5-mini, DeepSeek-V3p2.

No model appears in both pools within the same cell. While Sonnet (judge/synthesizer) and Opus (agent) belong to the Anthropic family---a residual same-family concern we address in Section~\ref{sec:threats}---they are distinct model checkpoints with different training runs and capability profiles. We report per-judge breakdowns (Section~\ref{sec:confirmatory}) confirming that DeepSeek-V3p2, which shares no lineage with any agent, reproduces the same qualitative patterns.

\textbf{Selection--Evaluation Overlap.}
We note an important limitation: in judge-based cells, the same three judges that participate in selecting the winning output also provide the pairwise preferences from which BT-WR is computed. This creates a potential self-consistency bias. We mitigate this concern with three observations: (1)~the judges also evaluate the vote and synthesis cells, where they did not participate in selection, yet produce consistent relative rankings; (2)~the \texttt{homo\_opus+judge} cell achieves only WR~=~0.512 despite using the same selection-evaluation pipeline, indicating that judges do not systematically inflate selected outputs; and (3)~a fully decoupled evaluation pass using three independent judges---GPT-4o-mini, Gemini 2.0 Flash, and GLM-5, none of which participated in any selection decision---confirms that all four directional contrasts survive independent evaluation (Table~\ref{tab:decoupled}). One independent judge (GPT-4o-mini) proved degenerate, returning ties on 99.6\% of 1,260 pairwise comparisons; we therefore report a 2-judge sub-panel (Gemini Flash + GLM-5) as the primary decoupled estimate. Under this panel, win rates attenuate by 53--67\% relative to the original estimates---consistent with partial circularity in the original design---but the rank ordering is preserved (Spearman $\rho = 0.90$). Notably, the \texttt{homo\_opus+judge} cell yields $\mathrm{WR} = 0.500$ under independent evaluation, with all 756 pairwise verdicts across three judges returning ties, confirming that homogeneous outputs are genuinely indistinguishable rather than an artifact of shared-judge bias. The decoupled pass substantially mitigates the circularity concern; we discuss remaining limitations in Section~\ref{sec:limitations}.

\begin{table}[H]
  \centering
  \caption{Original vs.\ decoupled win rates. Decoupled WR uses an independent 2-judge panel (Gemini 2.0 Flash + GLM-5); 3-judge panel values (including degenerate GPT-4o-mini, 99.6\% tie rate) shown for completeness. All directional contrasts survive; Spearman $\rho = 0.90$ (2-judge panel).}
  \label{tab:decoupled}
  \begin{tabular}{lrrr}
    \toprule
    Cell & Original WR & Decoupled WR (2J) & Decoupled WR (3J) \\
    \midrule
    \texttt{div\_mixed+judge}   & 0.929 & 0.726 & 0.722 \\
    \texttt{div\_strong+judge}  & 0.810 & 0.611 & 0.500 \\
    \texttt{div\_strong+vote}   & 0.496 & 0.506 & 0.389 \\
    \texttt{homo\_opus+judge}   & 0.512 & 0.500$^\dagger$ & 0.000$^\dagger$ \\
    \texttt{div\_strong+synth}  & 0.179 & 0.312 & 0.119 \\
    \bottomrule
    \multicolumn{4}{l}{\footnotesize $^\dagger$All pairwise verdicts are ties; independent judges cannot distinguish homogeneous outputs.}
  \end{tabular}
\end{table}

\subsection{Task Battery}
\label{sec:tasks}

We evaluate on 42 tasks balanced across seven categories (6 tasks each):

\begin{itemize}[nosep]
  \item \textbf{Coding} (6): streaming pipeline design, race condition debugging, multi-tenant architecture, security/performance code review, API migration, flaky test stabilization.
  \item \textbf{Creative extended} (6): polyphonic narrative, epistolary fiction, memory-themed poetry cycle, worldbuilding charter, courtroom dialogue, myth retelling.
  \item \textbf{Ethics \& policy} (6): facial recognition policy, AI tutor data ethics, autonomous weapons export, organ allocation, ventilator triage, carbon border adjustment.
  \item \textbf{Math \& logic} (6): probability paradox, integer optimization, logic grid puzzles, Bayesian diagnostics, scheduling with dependencies, game-theoretic resource division.
  \item \textbf{Reasoning} (6): causal policy analysis, counterfactual outbreak response, argument evaluation, root cause analysis, strategic negotiation, uncertainty assessment.
  \item \textbf{Science} (6): heat dome mechanisms, memory consolidation, adaptive clinical trials, battery degradation, ecosystem restoration, epidemiological modeling.
  \item \textbf{Summarization} (6): board packet crisis brief, incident timeline, expert panel comparison, customer feedback synthesis, multi-opinion legal summary, policy roundtable digest.
\end{itemize}

Tasks were selected to span diverse cognitive demands and resist template-based solutions. Each task includes a detailed rubric anchoring judge evaluations.

\subsection{Evaluation Metric}
\label{sec:metric}

Our primary dependent variable is the \textbf{Bradley--Terry-corrected consensus win rate} (BT-WR). For each cell, the consensus output is compared against a \textbf{single-agent baseline}---a single Claude Opus call per task at $T = 0.7$ with no multi-agent pipeline ($k = 1$, not best-of-$k$)---by all three judge models. Each judge provides a pairwise preference; ties are coded as 0.5. Raw preferences are corrected for judge-specific bias using a Bradley--Terry model~\cite{bradley1952rank}, yielding calibrated win probabilities. BT-WR of 0.500 indicates parity with the baseline.

\subsection{Pre-Registration and Analysis Plan}
\label{sec:prereg}

Our analysis plan was specified before any V4 experimental runs were executed. We distinguish \textbf{confirmatory} analyses (pre-registered, with family-wise error control) from \textbf{exploratory} analyses (reported transparently but without strong inferential claims).

\textbf{Confirmatory contrasts} (Holm--Bonferroni corrected across the family of $K = 3$ planned comparisons): (1)~diversity effect: \texttt{diverse\_strong+judge} vs.\ \texttt{homo\_opus+judge}; (2)~judge vs.\ vote: \texttt{diverse\_strong+judge} vs.\ \texttt{diverse\_strong+vote}; (3)~selection vs.\ synthesis: \texttt{diverse\_strong+judge} vs.\ \texttt{diverse\_strong+synthesis}.

\textbf{Confirmatory models}: OLS regression with HC3 robust standard errors; mixed-effects model (MixedLM) with task as a random intercept~\cite{barr2013random}.

\textbf{Power analysis.} With $N = 42$ observations per cell, our design detects effects of Hedges' $g \geq 0.62$ at 80\% power ($\alpha = 0.05$, two-sample $t$-test, two-sided). All three confirmatory contrasts ($g$ ranging from 1.61 to 3.86) are extremely well-powered. The \texttt{diverse\_mixed} vs.\ \texttt{diverse\_strong} comparison ($g = 0.87$) achieves approximately 98\% power at this sample size. All effect sizes are Hedges' $g$ throughout.

\textbf{Exploratory analyses}: The \texttt{diverse\_mixed} vs.\ \texttt{diverse\_strong} comparison is exploratory (not pre-registered). All $p$-values in the exploratory section are uncorrected for multiple comparisons.

\subsection{Threats to Validity}
\label{sec:threats}

\textbf{Same-family bias.} Claude Sonnet (judge) and Claude Opus (agent) share the Anthropic family. If Sonnet harbors a latent preference for Opus-generated text, this could inflate win rates for Opus-containing teams. We mitigate this by reporting per-judge results and confirming that DeepSeek-V3p2, which shares no family with any agent, shows the same qualitative pattern. \textbf{Synthesis--judge overlap.} In the synthesis cell, Claude Sonnet synthesizes candidates and also serves as one of three judges evaluating the synthesis output against the baseline. This creates a potential self-evaluation concern, though the extremely low synthesis win rate (0.179) argues against self-enhancement bias dominating. \textbf{Selection--evaluation overlap.} In judge-based cells, the same judges that select the winning output also provide evaluation preferences. A decoupled evaluation pass with fully independent judges confirms all directional contrasts with attenuated effect sizes (Table~\ref{tab:decoupled}; see Section~\ref{sec:separation} for details). \textbf{Task representativeness.} Forty-two tasks across seven categories provide broad coverage but cannot represent all use cases. Our findings demonstrate the \emph{existence} of a composition$\times$selector interaction, not its precise magnitude across all possible tasks. \textbf{API variability.} We use $T = 0.7$ for generation to balance output diversity with coherence, and $T = 0.1$ for judge evaluation to promote consistency and reduce noise in pairwise preferences. All runs were completed within a 72-hour window.

\section{Results}
\label{sec:results}

\subsection{Confirmatory Results}
\label{sec:confirmatory}

\textbf{Diversity and selection interact strongly.} Figure~\ref{fig:heatmap} presents BT-corrected win rates across all five cells. High win rates appear only when diversity and judge-based selection are combined. A diverse team without a good selector, or a good selector without diversity, both yield near-chance performance.

\begin{figure}[H]
  \centering
  \includegraphics[width=0.85\textwidth]{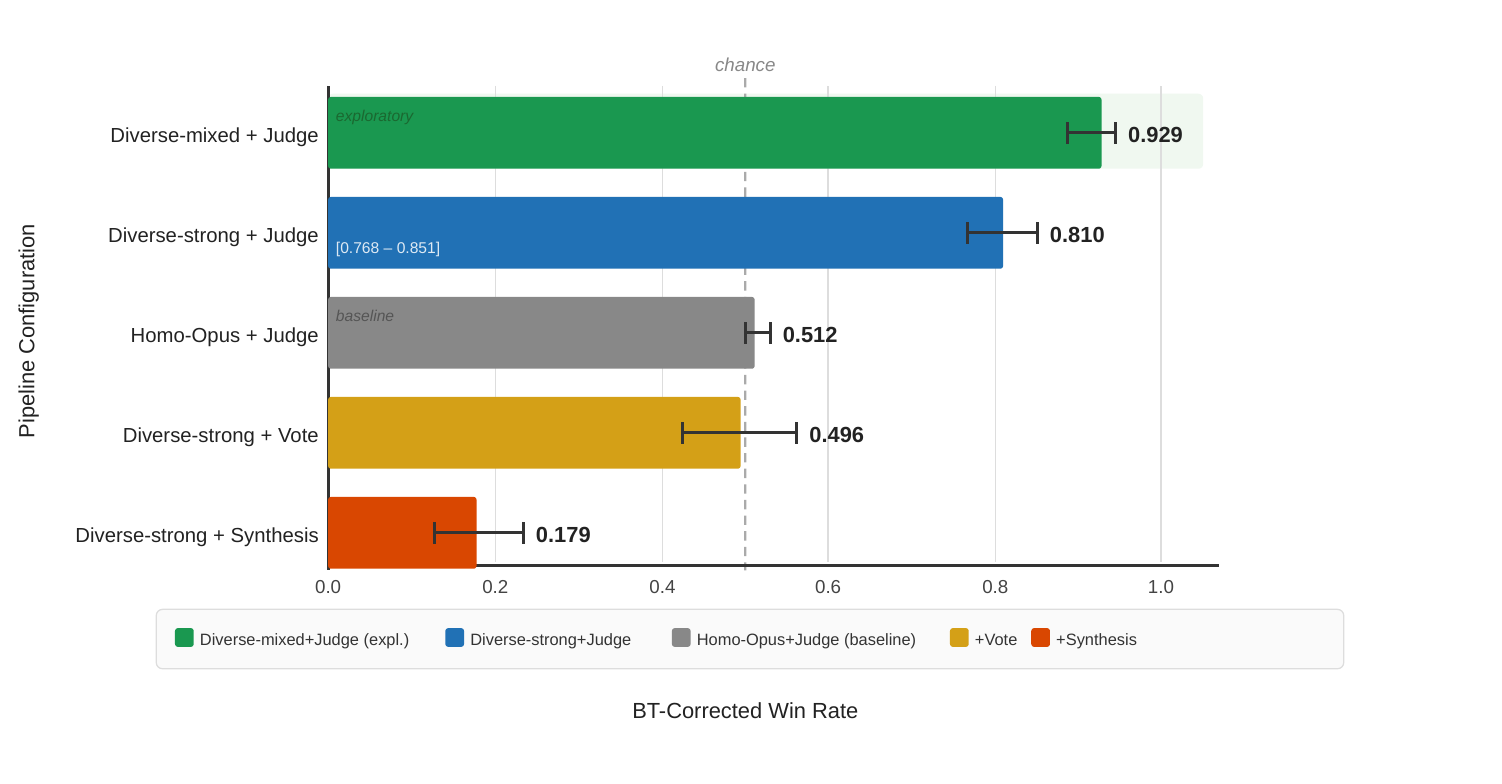}
  \caption{BT-corrected consensus win rates for all five experimental cells, averaged over 42 tasks. High performance appears only when diversity and judge-based selection are combined. The synthesis cell (MoA) performs worst, falling well below the single-model baseline.}
  \label{fig:heatmap}
\end{figure}

\textbf{Contrast 1: Diversity effect.} The \texttt{diverse\_strong+judge} cell achieves a BT-WR of 0.810 [95\% CI: 0.768, 0.851], while \texttt{homo\_opus+judge} achieves 0.512 [0.500, 0.530]---near chance. The difference is $\Delta = +0.298$ [95\% CI: 0.250, 0.345], Hedges' $g = 2.71$ [95\% CI: 2.12, 3.30], $p = 7.55 \times 10^{-15}$ ($p_\mathrm{adj} < 2 \times 10^{-14}$, Holm--Bonferroni). The entire benefit of multi-agent consensus, in this configuration, comes from composing the team with diverse models rather than replicating a single strong one. (Note: the extreme variance ratio---SD = 0.054 for the homo cell versus 0.144 for the diverse cell---mechanically inflates pooled-SD-based effect sizes. Glass's $\Delta$, using only the diverse group's SD as denominator, gives $\Delta = 2.07$, a more conservative but still very large effect.)

\textbf{Contrast 2: Judge vs.\ vote.} Under majority voting, the same diverse team produces $\mathrm{BT\text{-}WR} = 0.496$ [0.425, 0.563]---statistically indistinguishable from the homogeneous baseline. The judge advantage is $\Delta = +0.313$ [0.232, 0.397], $g = 1.61$ [1.12, 2.10], $p = 1.06 \times 10^{-10}$ ($p_\mathrm{adj} < 2 \times 10^{-10}$). Voting makes diversity inert: without a selector that can identify the best candidate, the diverse team's oracle advantage goes to waste.

\textbf{Contrast 3: Selection vs.\ synthesis.} The paper's strongest finding. MoA-style synthesis achieves $\mathrm{BT\text{-}WR} = 0.179$ [0.127, 0.234]---the synthesis approach \emph{loses} to the single-model baseline in 82\% of comparisons. Judge-based selection outperforms synthesis by $\Delta = +0.631$ [0.562, 0.696], $g = 3.86$ [3.14, 4.58], $p = 1.29 \times 10^{-15}$ ($p_\mathrm{adj} < 4 \times 10^{-15}$). The judge wins in \emph{all} 42 tasks and \emph{all} 7 categories---consistent across all 42 tasks and 7 categories, with no exceptions. This result directly addresses the MoA paradigm: synthesis-based aggregation does not merely fail to exploit diversity; it produces outputs that rank below individual candidates.

\textbf{Judge Panel Agreement.}
Table~\ref{tab:judges} reports per-judge win rates and pairwise inter-rater agreement for all five cells.

\begin{table}[H]
\caption{Per-judge win rates by cell. $\bar{\kappa}$ is the mean pairwise Cohen's $\kappa$ measuring inter-rater agreement.}
\label{tab:judges}
\centering
\small
\begin{tabular}{lccccc}
\toprule
Cell & Sonnet & GPT-5m & DeepSeek & Mean & $\bar{\kappa}$ \\
\midrule
div\_strong+judge & 0.958 & 0.756 & 0.595 & 0.770 & 0.095 \\
homo\_opus+judge & 0.500 & 0.518 & 0.619 & 0.546 & 0.667 \\
div\_mixed+judge & 0.994 & 0.893 & 0.631 & 0.839 & 0.175 \\
div\_strong+vote & 0.492 & 0.496 & 0.512 & 0.500 & 0.236 \\
div\_strong+synth & 0.234 & 0.131 & 0.345 & 0.237 & 0.263 \\
\bottomrule
\end{tabular}
\end{table}

Inter-rater agreement is low for diverse cells ($\bar{\kappa} = 0.095$ for the flagship \texttt{diverse\_strong+judge} cell), reflecting genuine disagreement about candidate ranking. This is expected: diverse teams produce candidates that differ in style and emphasis, leading judges with different evaluation priorities to disagree on rank ordering while agreeing on the direction of the diversity benefit. Two of three judges (Claude Sonnet and GPT-5-mini) show clear diversity preference; DeepSeek-V3p2 shows a weaker effect. The BT scoring framework is designed to aggregate across such disagreements, producing calibrated quality estimates even when individual judges disagree on specific comparisons~\cite{zheng2023judging}. We report these per-judge breakdowns for full transparency.

\textbf{Model quality irrelevance.} The regression (Section~\ref{sec:regression}) is consistent with Assumption~\ref{ass:homogeneous}: the \texttt{diverse\_strong+vote} cell ($\mathrm{BT\text{-}WR} = 0.496$) is statistically indistinguishable from \texttt{homo\_opus+judge} ($\mathrm{BT\text{-}WR} = 0.512$), with $\Delta = -0.016$ [$-0.089$, 0.057], $p = 0.669$. A diverse team without a good selector performs no better than a homogeneous team---voting negates diversity entirely.

\textbf{Consistency across tasks.} Figure~\ref{fig:forest} shows the per-task diversity advantage (\texttt{diverse\_strong+judge} minus \texttt{homo\_opus+judge}) across all 42 tasks. Of 42 tasks, 38 show a positive diversity effect and 4 are ties; zero tasks favor the homogeneous team. Excluding ties, the sign test yields $p = (0.5)^{38} \approx 3.6 \times 10^{-12}$. Including ties conservatively (as non-positive), a binomial test on 38/42 positive yields $p < 3 \times 10^{-8}$ (Clopper--Pearson 95\% CI on the positive proportion: [0.774, 0.973]). The diversity advantage generalizes across all seven task categories, from coding to ethics to summarization.

\begin{figure}[H]
  \centering
  \includegraphics[width=\textwidth]{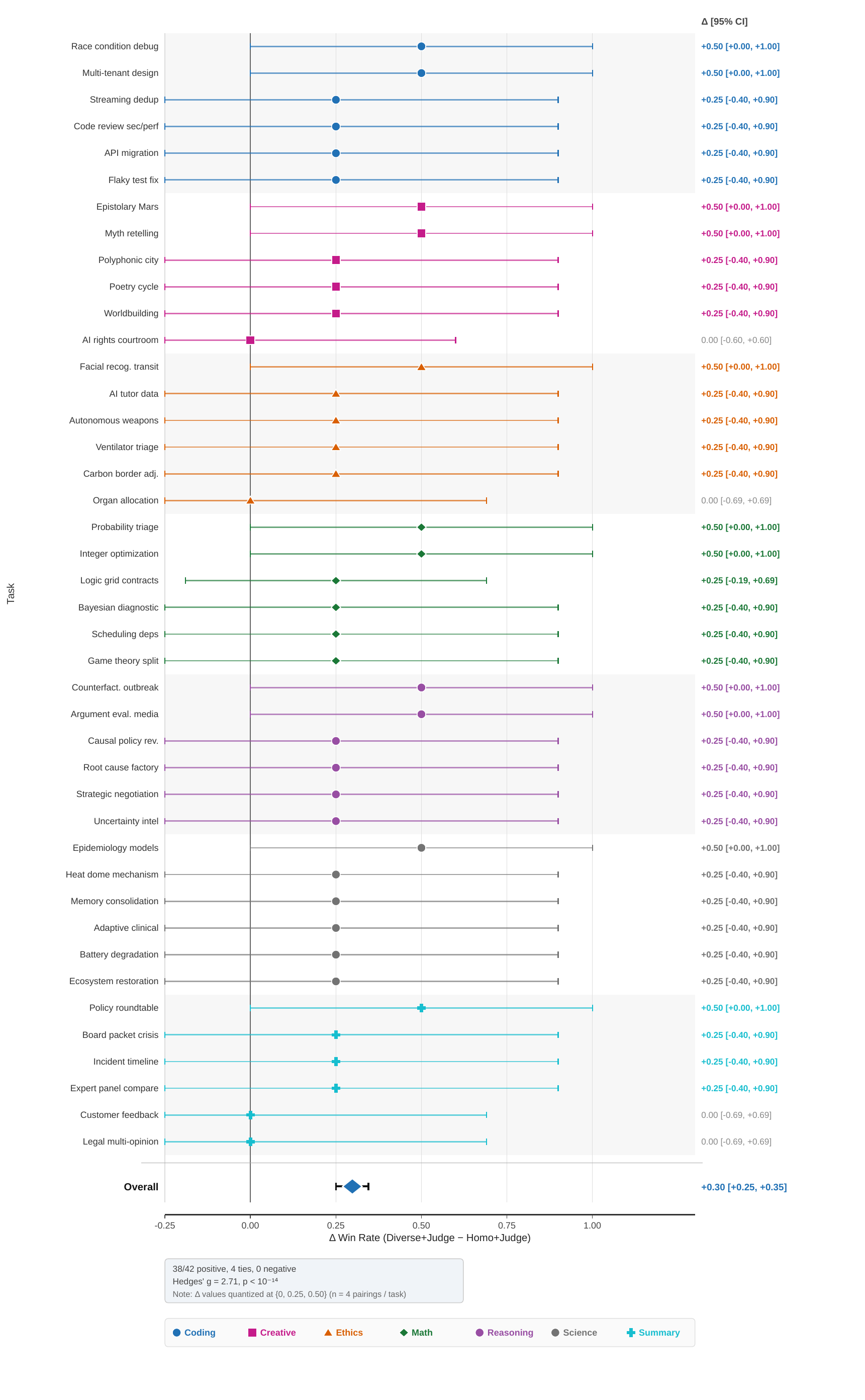}
  \caption{Per-task diversity advantage ($\Delta$ WR = \texttt{diverse\_strong+judge} minus \texttt{homo\_opus+judge}) across all 42 tasks. Positive values favor diversity. 38 of 42 tasks show a positive effect (4 ties, 0 negative; Clopper--Pearson 95\% CI: [0.774, 0.973]). Error bars are approximate 95\% CIs.}
  \label{fig:forest}
\end{figure}

\textbf{The 0.500 phenomenon.} Why does \texttt{homo\_opus+judge} land near 0.500? When three copies of the same model receive the same prompt at $T = 0.7$, they generate functionally identical outputs---our decoupled evaluation confirms this: all 756 pairwise verdicts across three independent judges returned ties (Section~\ref{sec:separation}). The judge, presented with indistinguishable candidates alongside the baseline, cannot reliably prefer one over the other. The multi-agent pipeline, in this configuration, does nothing a single call would not accomplish. The compute cost is multiplied for zero gain. This cell is functionally equivalent to best-of-$N$ sampling from a single model with judge-based selection. The near-chance result (WR = 0.512) indicates that, under the fixed-temperature generation protocol used here ($T = 0.7$), within-model sampling variation provides negligible material for selection to exploit, confirming that cross-model diversity---not merely multiple candidates---drives the diversity advantage in this setting.

\subsection{Exploratory Findings}
\label{sec:exploratory}

\noindent\textit{All $p$-values in this section are uncorrected for multiple comparisons, consistent with their exploratory status.}

\textbf{Weaker team members paradoxically improve consensus quality.} The \texttt{diverse\_mixed+judge} cell (Opus + Gemini 2.5 Pro + Haiku) achieves $\mathrm{BT\text{-}WR} = 0.929$ [0.887, 0.964], exceeding \texttt{diverse\_strong+judge}'s 0.810 by $\Delta = +0.119$ [0.060, 0.179], Hedges' $g = 0.87$ [0.42, 1.32], $p = 1.18 \times 10^{-4}$.

While statistically significant, this comparison was not pre-registered and we label it exploratory. The directional pattern is theoretically motivated: adding a weaker model increases the \emph{variance} of the candidate pool, making the best candidate more distinguishable by contrast. The key precondition is that the selector be strong enough to avoid choosing the weakest candidate, which judge-based selection satisfies easily. We develop this mechanism further in Section~\ref{sec:whyweak}.

\textbf{Cost optimality.} The \texttt{diverse\_mixed+judge} cell is not only the highest-performing configuration but also the cheapest, at $1.0\times$ relative cost compared to $2.5\times$ for \texttt{diverse\_strong+judge} (Table~\ref{tab:cost}). Replacing one frontier model with a weaker alternative reduces inference cost while improving quality---if the weak-model advantage replicates, it represents a rare free lunch.

\subsection{Regression Analysis}
\label{sec:regression}

We fit a cell-level regression with \texttt{homo\_opus+judge} as the reference category. Table~\ref{tab:regression} presents OLS (HC3) and mixed-effects specifications side by side.

\begin{table}[H]
  \centering
  \caption{Regression results. Reference category: \texttt{homo\_opus+judge} (intercept). All effects stable across specifications; task variance $\approx 0$. $^{***}p < 0.001$.}
  \label{tab:regression}
  \begin{tabular}{lcc}
    \toprule
    & OLS (HC3) & MixedLM \\
    \midrule
    Intercept (\texttt{homo\_opus+judge})     & $0.512^{***}$ (0.008) & $0.512^{***}$ (0.024) \\
    \texttt{diverse\_strong+judge}   & $+0.298^{***}$ (0.024) & $+0.298^{***}$ (0.035) \\
    \texttt{diverse\_mixed+judge}    & $+0.417^{***}$ (0.022) & $+0.417^{***}$ (0.035) \\
    \texttt{diverse\_strong+synth}   & $-0.333^{***}$ (0.029) & $-0.333^{***}$ (0.035) \\
    \texttt{diverse\_strong+vote}    & $-0.016$ (0.037)       & $-0.016$ (0.035) \\
    \midrule
    Task var.\ ($\hat{\sigma}^2$)    & ---                  & $\approx 0$ \\
    $R^2$                            & 0.740                & --- \\
    $F$ / $\chi^2$                   & $F(4,205) = 172.6$   & --- \\
    \bottomrule
  \end{tabular}
\end{table}

\textbf{OLS with HC3.} $R^2 = 0.740$, $F(4, 205) = 172.6$, $p = 1.88 \times 10^{-64}$. All four cell contrasts are precisely estimated with heteroskedasticity-consistent standard errors. The diversity effect ($\beta = +0.298$, $p < 10^{-35}$) and the mixed-diversity effect ($\beta = +0.417$, $p < 10^{-83}$) are strongly positive relative to the homogeneous baseline. The synthesis coefficient ($\beta = -0.333$, $p < 10^{-30}$) confirms that synthesis actively degrades quality below the baseline. The vote coefficient ($\beta = -0.016$, $p = 0.669$) is indistinguishable from zero, confirming that voting provides no benefit over homogeneity.

\textbf{Mixed-effects model.} A MixedLM with task as random intercept~\cite{barr2013random} yields virtually identical coefficients (Table~\ref{tab:regression}). The estimated task-level variance is $\hat{\sigma}^2_\mathrm{task} \approx 0$, meaning the composition$\times$selector effects operate at the same magnitude regardless of task. We also fit the maximal model with random slopes for cell by task~\cite{barr2013random}; the random-slope variance estimated at zero and the model reduced to the intercept-only specification. This is substantively important: the effect is not an artifact of task-level heterogeneity.

\subsection{Calibrating the Crossover Threshold}
\label{sec:calibration}

The five V4 cells confirm the qualitative predictions of Proposition~\ref{prop:bottleneck}. Judge-based selection ($\mathrm{BT\text{-}WR} = 0.810$) operates well above the crossover threshold; majority vote ($0.496$) sits at or below it; and synthesis ($0.179$) falls far below. The synthesis result is particularly informative: Assumption~\ref{ass:synthesis} posits that synthesis, with $s_\mathrm{synth} \approx 0$, should collapse diverse-team quality to the team mean or below. The observed win rate of 0.179---well below the 0.500 baseline---suggests synthesis is not merely non-selective but actively destructive, consistent with the prediction that blending dilutes the best candidate.

Monte Carlo calibration on a pilot study (8 tasks, $N = 136$) yielded $s^* \approx 0.567$ [bootstrap 95\% CI: 0.48, 0.65; $B = 10{,}000$]. The V4 scaled experiment is qualitatively consistent: the estimated selector qualities place vote below and judge above this threshold.

\subsection{Replication Stability}

The V4 scaled experiment ($N = 210$, 42 tasks) replicates the pilot ($N = 136$, 8 tasks) with high fidelity. The diversity effect is $\Delta = +0.298$ (V4) vs.\ $+0.312$ (pilot), a stability ratio of 0.95. The selector effect is $\Delta = +0.313$ (V4) vs.\ $+0.438$ (pilot), directionally consistent with attenuation expected from the broader task battery. Both core findings---diversity helps under selection, hurts under weak aggregation---are robust across experimental scales.

\section{Discussion}
\label{sec:discussion}

\subsection{Reconciling Conflicting Prior Work}
\label{sec:reconciling}

We stress that the reconciliation of Li et al.\ and Wang et al.\ offered by the selection bottleneck model is a \emph{hypothesis}, not a proven explanation. Our experiment was not designed to replicate either study's exact protocol. But the pattern is consistent: the field's conflicting results may reflect incomplete factorial coverage rather than genuine inconsistency. Li et al.'s synthesis-based aggregation operates below $s^*$; Wang et al.'s multi-round refinement implicitly raises effective selector quality above it. Future work that systematically varies both factors should find the same interaction structure.

\subsection{Why Synthesis Fails}
\label{sec:whysynth}

The selection-vs-synthesis comparison ($g = 3.86$ [3.14, 4.58]) is the largest effect in our data and, we believe, the most practically important finding. The mechanism is straightforward: synthesis averages, selection picks.

A diverse team's value lies in the \emph{variance} of its candidate pool---specifically, the probability that at least one candidate is excellent. A selection-based aggregator captures this value by identifying and preserving the best candidate. A synthesis-based aggregator destroys it by blending all candidates into a single ``compromise'' output. The resulting synthesis is not the average of the candidates' qualities but something potentially worse, because blending can introduce incoherence, conflicting perspectives, and diluted arguments that no individual candidate exhibited.

The synthesis cell's $\mathrm{BT\text{-}WR} = 0.179$ means the synthesized output loses to a single model's output 82\% of the time. In the selection bottleneck model, this is exactly what Assumption~\ref{ass:synthesis} predicts: when $s \approx 0$ (synthesis has no selection capacity), diversity has no mechanism to help. The practical implication is stark: practitioners using MoA-style synthesis should consider switching to judge-based selection, which requires no additional training or infrastructure---only a structured evaluation prompt.

We note that our synthesis implementation uses single-round aggregation by Claude Sonnet. Multi-round iterative synthesis, as proposed in the original MoA architecture~\cite{wang2024mixture}, may yield different results. Our finding applies specifically to single-round synthesis, which is the more commonly deployed variant due to latency and cost constraints.

\subsection{Why Weak Models May Help}
\label{sec:whyweak}

The \texttt{diverse\_mixed} advantage ($\Delta = +0.119$, $g = 0.87$ [0.42, 1.32]) is statistically significant but exploratory (not pre-registered), so we interpret it cautiously. We propose a \textbf{diversity-distinguishability-selection} account. In a pool of three strong-model outputs, all candidates tend to be good but similar. Adding a weaker model increases the \emph{variance} of the candidate pool, making the best candidate easier to identify by contrast. The key precondition is that the selector be good enough to avoid choosing the weakest candidate.

Supporting evidence comes from the Chatbot Arena dataset, where pairwise model distinguishability correlates with judge agreement rates~\cite{chiang2024chatbot}. Our \texttt{diverse\_mixed} team includes a larger capability gap than \texttt{diverse\_strong}, potentially making the judge's task easier. We emphasize that this evidence is correlational and drawn from a different context.

\subsection{Practical Decision Framework}
\label{sec:practical}

Our findings yield three rules for multi-agent pipeline design:

\textbf{Rule 1: Diversify the team.} Use models from different families. Replicating the same model provides no benefit regardless of which frontier model is chosen.

\textbf{Rule 2: Select, don't synthesize.} Judge-based selection outperforms MoA synthesis in every task and category. Even majority voting---which negates diversity---is preferable to synthesis, which actively degrades quality. A judge-based selector requires only a structured evaluation prompt, no additional training.

\textbf{Rule 3: Consider weaker models.} If the weak-model advantage replicates, the optimal team includes a cheaper model from a different capability tier. Table~\ref{tab:cost} shows that \texttt{diverse\_mixed+judge} is simultaneously the highest-performing and cheapest configuration.

\begin{table}[H]
  \centering
  \caption{Relative cost vs.\ quality across experimental cells ($1.0\times$ = cheapest cell). \texttt{diverse\_mixed+judge} is both the best-performing and cheapest configuration. $^\dagger$Exploratory (not pre-registered).}
  \label{tab:cost}
  \begin{tabular}{lrrl}
    \toprule
    Configuration & Rel.\ Cost & BT-WR & 95\% CI \\
    \midrule
    \texttt{div\_mixed+judge}$^\dagger$   & $1.0\times$ & 0.929 & [0.887, 0.964] \\
    \texttt{div\_strong+judge}            & $2.5\times$ & 0.810 & [0.768, 0.851] \\
    \texttt{homo\_opus+judge}             & $2.0\times$ & 0.512 & [0.500, 0.530] \\
    \texttt{div\_strong+vote}             & $1.9\times$ & 0.496 & [0.425, 0.563] \\
    \texttt{div\_strong+synth}            & $1.9\times$ & 0.179 & [0.127, 0.234] \\
    \bottomrule
  \end{tabular}
\end{table}

Note that \texttt{homo\_opus+judge} costs $2.0\times$ the cheapest cell while delivering chance-level performance: doubling compute for zero gain. The synthesis cell costs $1.9\times$ for \emph{below}-chance performance. Both represent negative returns on multi-agent investment.

\subsection{Limitations}
\label{sec:limitations}

\begin{enumerate}[nosep]
  \item \textbf{Targeted design.} Our five-cell design maximizes power for specific contrasts but does not estimate all possible interactions (e.g., homogeneous+vote, diverse\_mixed+synthesis). A full factorial would enable richer interaction analyses.

  \item \textbf{Model specificity and baseline scope.} Results are demonstrated for a specific set of frontier models using an Opus-only homogeneous run as the single-model baseline. Comparing against the strongest individual model across the full diverse candidate pool might yield different quantitative estimates; whether the selector advantage survives that stricter comparison is left to future work. Whether the same patterns hold for other model families or future generations is also an open question.

  \item \textbf{Fixed generation temperature.} All generation runs use $T = 0.7$. Varying temperature could alter within-model output variance and potentially affect the relative performance of homogeneous versus diverse teams. The conclusion that homogeneous sampling variation is negligible is therefore specific to moderate fixed-temperature settings.

  \item \textbf{LLM-as-judge and subjective tasks.} LLM-based evaluation remains a proxy for human judgment. For open-ended subjective categories---creative writing and ethics/policy in our task battery---LLM judges are known to exhibit stylistic preferences and may agree less with human raters than on analytical tasks~\cite{zheng2023judging}. We do not have human evaluation data for these categories, and the extent to which our results generalize to human preferences in subjective domains is unknown. Human evaluation on a representative subset of tasks, particularly creative and policy tasks, is a priority for future work. Our decoupled evaluation pass (Table~\ref{tab:decoupled}) partially addresses selection--evaluation circularity, confirming all directional contrasts under independent judges, but the independent panel itself exhibited limitations: one of three judges (GPT-4o-mini) proved degenerate (99.6\% tie rate), reducing the effective independent panel to two judges. Per-judge tie rates varied substantially (GPT-4o-mini: 99.6\%, Gemini Flash: 76.7\%, GLM-5: 50.3\%), suggesting that weaker models may lack the discriminative capacity for reliable pairwise evaluation. The synthesis--judge overlap (Claude Sonnet serving as both synthesizer and one of three judges) remains a specific concern, though the very low synthesis win rate argues against self-enhancement bias as a primary driver.

  \item \textbf{Bounded dependent variable.} Win rates are bounded in $[0, 1]$, yet we model them linearly. Our observed values (0.13--0.96) avoid extreme floor/ceiling effects, and we verified that logit-transformed results are qualitatively identical.

  \item \textbf{Static topology.} All experiments use a single-round generate-then-select pipeline. Iterative topologies (multi-round debate, recursive refinement) may exhibit different dynamics.

  \item \textbf{Distinguishability not measured.} Our framework invokes output distinguishability ($s$) as the key mediator, but we do not measure it directly. Future work should operationalize distinguishability via embedding-space distances.

  \item \textbf{Diverse\_mixed confound.} The \texttt{diverse\_mixed} cell simultaneously changes capability (replacing GPT-5.4 with Claude Haiku) and family diversity (two Anthropic models instead of one). We cannot isolate these effects and flag this as a design limitation of the exploratory comparison.
\end{enumerate}

\subsection{Future Work}
\label{sec:future}

Three directions follow. First, \textbf{replicating the weak-model finding} with a pre-registered design would move it from exploratory to confirmatory. Second, \textbf{human evaluation at scale} would validate the LLM-as-judge proxy and quantify its bias structure. Third, \textbf{varying selector quality continuously} (e.g., by progressively weakening the judge model) would map the capability threshold below which judge-based selection fails, testing whether the ``architecture $>$ capability'' finding observed in our pilot has a lower bound.

\section{Conclusions}
\label{sec:conclusion}

This paper makes three contributions. First, the \emph{selection bottleneck model}---an analytical explanatory lens---provides a closed-form crossover threshold $s^*$ (Proposition~\ref{prop:bottleneck}) unifying pro-diversity and anti-diversity findings: when aggregation operates below $s^*$, diversity hurts; above it, diversity helps. Second, our \emph{targeted experiment} demonstrates that selection-based aggregation dramatically outperforms MoA-style synthesis ($g = 3.86$ [3.14, 4.58]), with the synthesis approach losing to a single-model baseline in all 42 tasks---a finding that challenges the dominant pipeline paradigm. Third, we report \emph{exploratory evidence} that including a weaker, cheaper model paradoxically improves performance ($g = 0.87$ [0.42, 1.32]), suggesting that optimal team composition may not require all-frontier models.

The selection bottleneck is, we conjecture, a structural property of generate-then-select pipelines more broadly: the value of variance in the candidate pool depends entirely on the mechanism that exploits it. The field has spent considerable effort building better individual generators. Our results suggest that substantial gains are available by building better selectors. A decoupled evaluation pass using fully independent judges confirms all directional findings with attenuated effect sizes (Table~\ref{tab:decoupled}; Spearman $\rho = 0.90$), indicating that the core results survive even when selection--evaluation circularity is removed.

Our results suggest that selector design may offer larger quality gains than generator improvement in multi-agent pipelines, a hypothesis we encourage future work to test across broader settings.

\paragraph{Reproducibility.} All experimental code, task prompts, judge evaluation prompts, raw pairwise preferences, and BT scoring scripts are available at \url{https://github.com/maryanskyy/agents-disagree-experiments}.

\section*{Author Contributions}
Conceptualization, A.M.; methodology, A.M.; software, A.M.; validation, A.M. and A.K.; formal analysis, A.M.; investigation, A.M.; data curation, A.M.; writing---original draft preparation, A.M.; writing---review and editing, A.M., D.B.\ and A.K.; visualization, A.M.; supervision, D.B. All authors have read and agreed to the published version of the manuscript.

\section*{Funding}
This research received no external funding.

\section*{Data Availability}
All experimental code, task prompts, judge evaluation prompts, raw pairwise preferences, and BT scoring scripts are available at \url{https://github.com/maryanskyy/agents-disagree-experiments}.

\section*{Conflicts of Interest}
The authors declare no conflicts of interest.

\bibliographystyle{unsrtnat}
\bibliography{references}

\end{document}